\g@addto@macro{\UrlBreaks}{\UrlOrds}
\renewcommand{\algorithmicreturn}[1]{\bgroup\\  ~#1\egroup}
\renewcommand{\algorithmiccomment}[1]{\bgroup\hfill//~#1\egroup}
\newcommand{\thicktilde}[1]{\mathbf{\tilde{\text{$#1$}}}}
\theoremstyle{plain} \newtheorem{lemma}{\textbf{Lemma}}
\theoremstyle{plain} 
\theoremstyle{plain} \newtheorem{remark}{\textbf{Remark}}
\theoremstyle{plain} \newtheorem{theorem}{\textbf{Theorem}}
\theoremstyle{plain} 
\theoremstyle{plain} 
\theoremstyle{plain} \newtheorem{corollary}{\textbf{Corollary}}
\theoremstyle{plain} 
\theoremstyle{plain} \newtheorem{definition}{\textbf{Definition}}
\theoremstyle{definition}
\newcommand{\pushright}[1]{\ifmeasuring@#1\else\omit\hfill$\displaystyle#1$\fi\ignorespaces}
\newcommand{\pushleft}[1]{\ifmeasuring@#1\else\omit$\displaystyle#1$\hfill\fi\ignorespaces}
\let\@@pmod\pmod
\DeclareRobustCommand{\pmod}{\@ifstar\@pmods\@@pmod}
\def\@pmods#1{\mkern4mu({\operator@font mod}\mkern 6mu#1)}
\newcounter{NbCogito} \setcounter{NbCogito}{0}
\newcounter{NbFactum} \setcounter{NbFactum}{0}
\newcounter{NbTabulare} \setcounter{NbTabulare}{0}
\newcounter{NbCoco} \setcounter{NbCoco}{0}
\newcommand{\rmv}[1]{}
\def\cA{{\mathcal A}}
\def\cB{{\mathcal B}}
\def\cC{{\mathcal C}}
\def\cD{{\mathcal D}}
\def\cE{{\mathcal E}}
\def\cP{{\mathcal P}}
\def\cT{{\mathcal T}}
\def\FF{{\mathbb F}}
\def\ZZ{{\mathbb Z}}
\newenvironment{myproof}[1][\myproofname]{\par
\normalfont \topsep6pt\relax
\trivlist
\item[\hskip\labelsep
\itshape
#1.]\ignorespaces
}{%
\endtrivlist\hfill$\square$
}
\providecommand{\myproofname}{Proof}
\title{Sharing the Path: A Threshold Scheme from Isogenies and Error Correcting Codes}
\author{Mohamadou Sall}
\affil
{\stackunder{\small{Department of Electrical and Computer Engineering, University of Waterloo, Canada}}
{\mbox{\small{\texttt{msall@uwaterloo.ca}}}}
}
\author{M. Anwar Hasan}
\affil
{\stackunder{\small{Department of Electrical and Computer Engineering, University of Waterloo, Canada}}
{\mbox{\small{\texttt{ahasan@uwaterloo.ca}}}}
}
\providecommand{\keywords}[1]
{
  \textbf{Keywords:} #1
}
\date{}
\begin{document}

\maketitle

\begin{abstract}
In 2022, a prominent supersingular isogeny-based cryptographic scheme, namely SIDH, was compromised by a key recovery attack. However, this attack does not undermine the isogeny path problem, which remains central to the security of isogeny-based cryptography. Following the attacks by Castryck and Decru, as well as Maino and Martindale, Robert gave a mature and polynomial-time algorithm that transforms the SIDH key recovery attack into a valuable cryptographic tool. In this paper, we combine this tool with advanced encoding techniques to construct a novel threshold scheme.
\end{abstract}

\keywords{Secret Sharing, Threshold Scheme, Isogeny Path Problem.}

\section{Introduction}

Blockchain technology has sparked widespread attention since its anonymous introduction in 2009 \cite{N08}. It offers a novel paradigm in which transactions are maintained in a decentralized data structure and as such it is widely used in cryptocurrencies, including Bitcoin and Ethereum. In blockchain applications secret sharing schemes (SSS) are used extensively, in particular in electronic voting, data storage, and supply 
chain management. 

\begin{definition}\label{def:SSS}
A secret sharing scheme consists of a dealer, a group $\cP=\{\cP_1, \cdots, \cP_n \}$ of $n$ participants, a secret space $S$ with a uniform probability distribution, $n$ share spaces $S_1,\cdots, S_n$, a share and a secret recovering procedures $\mathcal{S_P}$ and $\mathcal{R_P}$.
\end{definition}
On the other hand, threshold schemes allow the distribution of a secret key into shares among multiple parties or devices, such that only a set of authorized parties can jointly recover the secret and perform cryptographic operations such as signing and decrypting data.  
\begin{definition}
A secret sharing scheme is called a $(n, t)$-threshold scheme, where $1\leq t\leq n$ if the following conditions are satisfied: (1) the number of participants is $n$; (2) any subset of $t$ participants can obtain all information about the secret with their shares; and (3) any subset of $t-1$ or fewer participants cannot get any information about the secret with their shares.
\end{definition}
Since their independent introduction by Shamir \cite{S79} and Blakley \cite{B79}, secret sharing and threshold schemes have been widely used in many classical 
and post-quantum cryptography
protocols. 
The main advantage of their design is that key recovery attacks on threshold schemes require more effort than on the non-threshold ones, as the adversary has to attack more than one device or party simultaneously. The ability to distribute trust among several parties has sparked interest from governmental bodies such as NIST \cite{NISTthresh} and driven the generalization of Shamir's idea, which relies on the Lagrange interpolation, to many other fields. Indeed, as early as 1981, MacElliece and Sarwate \cite{MS81} established a connection between Shamir secret sharing and error-correcting codes.\\

\noindent \textbf{Error Correcting Codes and Secret Sharing Schemes:}
An $[\ell, k, d]$ linear code $\cC$ of length $\ell$ over the finite field $\FF_{q^m}$ is a $k-$dimensional subspace of $\FF_q^m$ with minimum (Hamming) distance $d$. 
The dual code of $\cC$ is defined to be its orthogonal subspace $\cC^{\perp}$ with respect to the Euclidean inner product.
\begin{theorem}[\cite{MS78}]
    A code $\cC$ with minimum distance $d$ can correct $\frac{1}{2}(d-1)$ errors. If $d$ is even, the code can simultaneously correct $\frac{1}{2}(d-1)$ errors and detect $d/2$ errors.
\end{theorem}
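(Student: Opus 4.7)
The plan is to work entirely from the triangle inequality for the Hamming distance $d_H$ on $\FF_{q^m}^{\ell}$, combined with the defining property that any two distinct codewords of $\cC$ lie at distance at least $d$ apart. I would set $t = \lfloor (d-1)/2 \rfloor$ and argue that the closed Hamming balls of radius $t$ around distinct codewords are pairwise disjoint. Indeed, if some word $y$ were simultaneously within distance $t$ of two codewords $c \neq c'$, then
\[
  d \;\leq\; d_H(c,c') \;\leq\; d_H(c,y) + d_H(y,c') \;\leq\; 2t \;\leq\; d-1,
\]
a contradiction. Hence, given a received word $y$ obtained from a codeword $c$ by at most $t$ errors, the nearest-neighbor decoder outputs the unique codeword within distance $t$ of $y$, which must be $c$. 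This proves that $t = \lfloor (d-1)/2 \rfloor$ errors can always be corrected.

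For the second assertion, assume $d$ is even and set the correction radius to $t = d/2 - 1$ while allowing the decoder to declare ``error detected'' whenever the received word is not within distance $t$ of any codeword. The correction part still follows from the first assertion since $\lfloor (d-1)/2 \rfloor = d/2 - 1$ when $d$ is even. For the detection part, suppose $c$ is transmitted and exactly $d/2$ errors occur, producing a received word $y$ with $d_H(y,c) = d/2$. I would show that $y$ lies in no ball of radius $t = d/2 - 1$ around any codeword, so the decoder refuses to output and the error is flagged. If $y$ lay within distance $d/2 - 1$ of $c$ itself, that would contradict $d_H(y,c) = d/2$; if $y$ lay within $d/2 - 1$ of some other codeword $c' \neq c$, then
\[
  d \;\leq\; d_H(c,c') \;\leq\; d_H(c,y) + d_H(y,c') \;\leq\; \tfrac{d}{2} + \bigl(\tfrac{d}{2}-1\bigr) \;=\; d-1,
\]
again a contradiction. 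Thus the ball of radius $d/2$ around $c$ is disjoint from every ball of radius $d/2 - 1$ around other codewords, giving simultaneous correction of $d/2 - 1$ errors and detection of $d/2$ errors.

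There is no serious obstacle here; the proof is entirely metric and relies on nothing beyond the triangle inequality and the definition of minimum distance. The only subtlety worth flagging is choosing the decoding regions asymmetrically in the even-$d$ case (radius $d/2 - 1$ for correction, radius $d/2$ for detection) so that the two guarantees coexist without overlap, which is precisely what makes the correction-plus-detection claim sharper than either claim alone.
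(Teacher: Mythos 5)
Your proof is correct; it is the standard metric argument via disjoint Hamming balls (sphere-packing) that any coding-theory textbook uses, and it handles both the pure error-correction bound and the asymmetric correct-plus-detect case cleanly. Note that the paper itself does not prove this statement --- it is quoted directly from MacWilliams and Sloane \cite{MS78} --- so there is no paper proof to compare against; your argument matches the classical one found in that reference.

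One small wording point: in the detection half you say ``suppose exactly $d/2$ errors occur,'' which is fine here because the gap between the correction radius $d/2-1$ and the detection radius $d/2$ is a single integer, so there is nothing in between to worry about; but to state the conclusion in the usual form (``any pattern of at most $d/2$ errors is either corrected or detected'') you should explicitly note that the two cases $d_H(y,c)\le d/2-1$ and $d_H(y,c)=d/2$ exhaust all weights up to $d/2$. Also, the theorem's phrase ``$\frac{1}{2}(d-1)$ errors'' should be read as $\lfloor (d-1)/2\rfloor$, which is exactly what you did; when $d$ is even this is $d/2-1$, and that interpretation is what makes the simultaneous correction/detection claim consistent.
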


An erasure is an error for which the error location is known but the error magnitude is not known. It is easier to correct an erasure because we don't need the detection process. Indeed, a code $\cC$ with minimum distance $d$ can correct up to $(d-1)$ erasures. The correction capacity of a linear code $\cC$ depends on its minimum distance but it is also desirable to have a large dimension $k$, which means the code can transmit large amounts of information. However, for a fixed length $\ell$, it is difficult to have large $d$ and $k$ due to the {\it singleton bound}, which is 
$$d\leq \ell-k+1.$$
Codes with $d = \ell - k + 1$ are called {\it maximum distance separable (MDS) or optimal} codes. This class of codes or codes with large minimum distance plays a central role in cryptography, particularly in secret sharing and threshold schemes. For example, in \cite{M93, M95} Massey gave a code-based secret sharing scheme where the threshold $t$ is related to the minimum distance $d^{\perp}$ of the dual code $\cC^{\perp}$. Indeed as shown by Renvall and Ding in \cite{RD96} in Massey's construction any set of $d^{\perp}-2$ or fewer shares does not give any information on the secret, and there is at least one set of $d^{\perp}$ shares that determines the secret. Van Dijk pointed out in \cite{D95} that Massey's approach is a special case of the construction introduced by Bertilsson and Ingemarson in \cite{BI92} that uses linear block codes. We have various code-based secret sharing schemes in the literature with notable similarities and generalizations. Some constructions modify the type of codes \cite{WD14}, others improve on crucial operations e.g., XOR \cite{CLM16} 
to gain efficiency, and some leverage on the encoding system \cite{HB22} 
for more performance. The aim of this paper is to combine the properties of error-correcting codes and supersingular elliptic curves torsion points to share a secret isogeny path.\\

\noindent \textbf{Elliptic Curves and Secret Sharing Schemes:} Elliptic curve cryptography (ECC) relies on the hardness of the discrete logarithm problem known as  ECDLP. Due to its versatile application (e.g., in secure DNS, TLS, Bitcoin),
there have been significant efforts in proposing DLP-based threshold schemes. 
However, these cryptographic algorithms can not withstand attacks from quantum computers. The NIST post-quantum cryptography competitions 
aims to find replacements for the current number-theoretic solutions based on integer factoring and discrete logarithms. 
Hard problems based on fields such as coding theory, lattices, and multivariate polynomials are believed to be good candidates. In \cite{CS19}, Cozzo and Smart observed that multivariate-based schemes LUOV \cite{BPSV19} and Rainbow \cite{DCPSY19} allow for a natural threshold construction and recently del Pino et al. \cite{PKMMPS24} constructed Raccoon, a threshold signatures scheme from lattices.\\
\indent Another solution to the threat of quantum computers is maps between elliptic curves i.e., isogenies. Despite the attacks \cite{CD23, MM22, R23} made on SIDH and SIKE, supersingular isogeny-based cryptography remains fully trustworthy. 
Indeed, these are key recovery attacks that do not affect the hard problem underlying isogeny-based cryptography. That is the problem of finding a path between two given elliptic curves or equivalently computing the endomorphism ring of an elliptic curve. 
For example, the work in \cite{CD23, MM22, R23} does not undermine the security of CSIDH and SQIsign, which is a prominent secure post-quantum isogeny-based scheme that recently advanced to the second round of NIST additional post-quantum digital signature scheme competition.\\
\indent Due to their small key size and practical applications in constrained devices, isogeny-based cryptography has gained a lot of attention and recently many schemes have been designed to give their threshold versions. In \cite{FM20}, Feo and Meyer adapted Desmedt and Frankel’s scheme \cite{D93} to the Hard Homogeneous Spaces (HHS) framework of Couveignes \cite{C06}. They initiate the study of threshold schemes based on isogenies and as a follow up we encounter many interesting threshold schemes, including Sashimi \cite{CS20}, SCI-RAShi \cite{BDPV21}, and SCI-Shark \cite{ABCP23}. These schemes are initially related to the action of a group $G$ on a set of elliptic curves $\cE$. Indeed, in Feo and Meyer design, the goal of the participants $\cP_1, \cP_2, \cdots, \cP_t$ is to use their secret shares $s_i$  to evaluate the group action $[s]E_0$, where $s\in \ZZ/q\ZZ$ is a secret and $E_0$ a given elliptic curve in $\cE$. \\

\noindent \textbf{Our Main Goal:} In this paper, we focus on a somewhat different strategy: thresholdizing isogeny-based cryptography using torsion points. Since the secret isogeny path is the backbone of post-quantum algorithms using supersingular elliptic curves, to ensure more security, it would be interesting to share this secret among $n$ participants with a threshold of recovery $t\leq n$. As we mentioned, there are many isogeny-based threshold schemes but to the best of our knowledge, this is the first technique that uses the attack on SIDH as a tool and shares the isogeny path upon which many post-quantum encryptions and signatures schemes rely.\\

\noindent {\bf Organization:} In Section \ref{sec:prelim}, we give preliminaries on isogeny representation and encoding functions over elliptic curves. We particularly highlight how the attack on SIDH/SIKE becomes a valuable cryptographic tool. In Section \ref{sec:TuTPE}, we give our main results and the algorithms to share and recover a secret using isogenies and encoding techniques. In Section \ref{sec:SS}, we analyze the security of those algorithms. In Section \ref{sec:SCS}, we extend our study to particular types of error-correcting codes that fit better our study. This extension leads to corollaries derived from the main theorem using Reed-Solomon codes, in Subsection \ref{subsec:BRSIS}, and burst erasure corrections, in Subsection \ref{subsec:BESIR}.

\section{Preliminaries}\label{sec:prelim}


\subsection{Isogeny Representation: From an Attack to a Tool}\label{subsec:IRAT}

Let $E_0$ and $E_1$ be elliptic curves over a finite field $\FF_q$. An isogeny from $E_0$ to $E_1$ is a surjective morphism $$I: E_0 \longrightarrow E_1$$ satisfying $I(O)=O$ where $O$ is the identity element of the elliptic curve's group. Two elliptic curves $E_0$ and $E_1$ are isogenous if there is an isogeny from $E_0$ to $E_1$. The hard problem behind isogeny-based cryptography is stated as follows.\\

\noindent
{\bf Isogeny Path Problem (IPP).}
Given two isogenous elliptic curves $E_0$ and $E_1$ over a finite field $\FF_q$, compute an isogeny from $E_0$ to $E_1$.\\

Many cryptographic schemes based on isogenies (some still secure others not) between ordinary  
and supersingular elliptic curves have been studied in the literature. These schemes are attractive because of their compacity and their small key sizes. This has been one of the most valuable advantages of SIKE, which made it to the 4th round of the NIST competition.
Unfortunately, the algorithm was subject to a \textit{key recovery attack} \cite{CD23, MM22, R23} and the SIKE team acknowledged that SIKE and SIDH from which it is derived are insecure and should not be used. However, it is important to highlight that despite this attack isogeny-based cryptography remains fully trustworthy. We explain this trustworthiness in the following.\\
\indent The high-level idea of supersingular Diffie-Hellman key exchange protocol is described by Jao and Feo \cite{JF11} as follows. To communicate securely Alice and Bob select respectively two secret isogenies $I_A: E_0 \rightarrow E_A $ and $I_B: E_0 \rightarrow E_B$.
\begin{itemize}
    \item They fix the respective degrees $N_A$ and $N_B$ of the isogenies $I_A$ and $I_B$, with $N_A$ and $N_B$ two coprime integers.
    \item They choose two bases $\cB_A$ and $\cB_B$ respectively for $E_0[N_A]$, the $N_A-$torsion of $E_0$, and $E_0[N_B]$, the $N_B-$torsion of $E_0$.
    \item Alice reveals the image of $\cB_B$ and Bob reveals the image of $\cB_A$.
\end{itemize}
As we can see, SIDH (and SIKE) does not directly rely upon the IPP problem but it uses the following `mild' version of this problem.\\ 

\noindent
{\bf Supersingular Isogeny with Torsion Problem (SI-TP).}
Let $E_0$ be a supersingular curve defined over $\FF_{q}$ with $q=p$ or $q=p^2$. Set $E_0[N] \ = \ \langle P, Q \rangle$. Let $I: E_0 \rightarrow E_1$ be an isogeny of degree $d$ and let $P' = I(P)$ and  $Q' = I(Q)$. Given $E_0, \ P, \ Q, \ E_1, \ P'$ and $Q'$, compute the isogeny $I$.\\

The work in \cite{CD23, MM22, R23} breaks an instance of the SI-TP problem but, not the more general (supersingular) isogeny path problem IPP. It does not apply to protocols like CSIDH \cite{CLMPR18} and SQIsign and its variants, whose security reduces to a distinguishing problem on isogeny walks generated according to an ad-hoc distribution. Furthermore,  as shown by Feo, Fouotsa, and Panny \cite{FFP24}, we can also have other variants of the IPP problem that can be hard depending on the chosen parameters. The key hardness problem of many isogeny-based protocols is based on the difficulty of recovering a large degree isogeny $I: E_0 \rightarrow E_1$ between two ordinary or supersingular elliptic curves. Without more information on $E_0$ and $E_1$, like images of torsion points or an explicit representation of part of their endomorphism rings, this problem still has exponential quantum security for supersingular curves.\\
\indent The successful attack on SIDH/SIKE gave a U-turn in the isogeny-based cryptography field. It made an emphasis on well-known secure schemes such as SQIsign and gave rise to a valuable cryptographic tool. When $E_0$ is a random curve, the dimension 2 attack of Maino and Martindale \cite{MM22} and Castryck and Decru \cite{CD23} are in heuristic subexponential time. To gain efficiency
Robert \cite{R23} extended these attacks using Kani's lemma and Zarhin's trick. He gave, for the first time \cite{R22} a precise complexity estimate, and a polynomial time attack (with or without 
a random starting curve) with no heuristics. We summarize Robert's efficient attack on SIDH in the following theorem.
\begin{theorem}\label{theo:IsoEfficiency}
    Let $N$ be a $polylog(q)$-smooth integer and $I: E_0 \rightarrow E_1$
    be an isogeny of degree $d < N^2$. Let $\langle P, Q\rangle$ be a basis of $E[N]$ such that $P$ and $Q$ can be represented over a small extension of $\FF_q$. Assume that we are given the image of $I$ on the basis $\langle P, Q\rangle$. Then an instance of the SI-TP problem can be solved in time $polylog(q)$.
\end{theorem}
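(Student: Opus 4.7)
The plan is to prove this via the embedding approach underlying Robert's breakthrough: use Kani's lemma to realise the unknown isogeny $I$ as a factor of an auxiliary higher-dimensional isogeny $F$ whose kernel can be written down explicitly from the given torsion-point images, and then compute $F$ step by step through its smooth factorisation.

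First I would complete the degree. Look for a small integer $a$ (and if dimension $2$ fails, tuples of integers coming from Zarhin's trick) such that $a^2 + d = N^2$ (respectively $N^2 - d$ is represented as a sum of $2$ or $4$ squares). This is the algebraic input that lets Kani's lemma apply. Once such an $a$ is found, I would define on the abelian surface $E_0 \times E_1$ (or, in the Zarhin case, on $E_0^4 \times E_1^4$) the endomorphism-valued matrix
\[
F \;=\; \begin{pmatrix} a & \widehat{I} \\ -I & a \end{pmatrix},
\]
which, by Kani's lemma, is an isogeny of degree $N^2$ (resp.\ $N^{2g}$ in higher dimension) whose kernel is isotropic for the product polarisation.

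Next I would describe the kernel of $F$ purely from the data at hand. A direct computation shows
\[
\ker F \;=\; \bigl\{\,(aR,\,I(R)) \;:\; R \in E_0[N]\,\bigr\},
\]
and since $E_0[N] = \langle P,Q\rangle$ and the images $I(P),\,I(Q)$ are part of the input, the two generators $(aP, I(P))$ and $(aQ, I(Q))$ of $\ker F$ are available explicitly as points defined over a small extension of $\FF_q$. Because $N$ is $\mathrm{polylog}(q)$-smooth, I can factor the computation of $F$ as a chain of isogenies of small prime degree $\ell \mid N$ in dimension $2$ (or $4$, $8$). Each such small-degree step is computable in $\mathrm{polylog}(q)$ time using Richelot-type or theta-coordinate formulas, and at every step the remaining kernel generators are pushed through using the standard image-under-isogeny routine, so the total cost stays $\mathrm{polylog}(q)$.

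Finally, once $F$ has been evaluated, I would recover $I$ itself. By construction the second column of $F$ lets me evaluate $I$ on any point of $E_0$ by feeding $(R,0)$ through $F$ and reading off the second coordinate; alternatively, the last component of $F$ reconstructs $\widehat I$ and hence $I$ up to the known scalar $a$. The main obstacle I anticipate is the number-theoretic step at the beginning: for arbitrary $d < N^2$ there may be no representation by two squares, which forces the use of Zarhin's trick and pushes the ambient dimension to $4$ or $8$. The dimension-$8$ isogeny machinery is heavier, but Robert shows it still runs in polynomial time in $\log q$, so the argument closes. Everything else reduces to bookkeeping of torsion images through smooth-degree isogeny chains, which is routine once the embedding is set up.
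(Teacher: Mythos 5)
This theorem is stated in the paper as a summary of Robert's result and is not proved there; the paper simply cites \cite{R23} and, in the paragraph following the statement, records the complexity $\thicktilde{O}(\log N\,\ell_N^8)$ of the dimension-$8$ variant. Your sketch is therefore not competing with an in-paper argument but reconstructing the cited one, and it does capture the right skeleton of Robert's attack: complete the degree ($N^2-d$ as a square, a sum of two squares, or by Lagrange a sum of four squares, with Zarhin's trick lifting to $E_0^4\times E_1^4$ when the lower-dimensional embeddings fail), write down the kernel of the higher-dimensional isogeny from the known torsion images, compute it along the $\mathrm{polylog}(q)$-smooth factorisation of $N$ using theta/Richelot-type formulas, and read off $I$ by evaluating the full isogeny.

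One imprecision worth flagging: the matrix
\[
\alpha=\begin{pmatrix} a & \widehat{I}\\ -I & a\end{pmatrix}
\]
is an endomorphism of $E_0\times E_1$ satisfying $\widehat{\alpha}\alpha=[N^2]$, so as a two-dimensional isogeny it has degree $N^4$, not $N^2$, and its kernel has order $N^4$. The object you compute step by step is the $(N,N)$-isogeny $F$ through which $\alpha$ factors; that is the map whose kernel is the isotropic subgroup $\{(aR,\,I(R)):R\in E_0[N]\}$ of order $N^2$, generated by $(aP,I(P))$ and $(aQ,I(Q))$. Keeping $\alpha$ and $F$ distinct is what makes the isotropy condition (needed for Kani's lemma to produce a polarised quotient) come out cleanly. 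With that correction the outline is consistent with Robert's construction and with what the paper invokes.
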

In more details, for $\ell_N$ a large prime factor of $N$, Robert \cite[Section 2]{R23} proved that the dimension 8 attack costs $\thicktilde{O}(\log N\ell_N^8)$ arithmetic operations, where  $\thicktilde{O}(x)$ is a shorthand for $O(x polylog(x))$. Furthermore, in some refine cases such as when $\ell_N=O(1)$, or $\ell_N = O(\log \log N)$, the attack is in quasi-linear time, i.e., in $\thicktilde{O}(\log N)$ arithmetic operations in $\FF_q$. So it became as efficient asymptotically as the key exchange itself. This is how the attack turns into a valuable cryptographic tool that is currently used in various schemes. 
In the context of our study, we aim to use this tool alongside encoding techniques to securely share the secret path.

\subsection{Encoding on Elliptic Curves}\label{subsec:EEC}

Many cryptographic protocols using elliptic curves are related to the following question:
how can one represent a uniform point on an elliptic curve $E(\FF_q)$ in a public 
or a private way as a close to uniform random bit string?\\
\indent The standard technique of encoding a point $(x, y)\in E(\FF_q)$ is to compress it in order to include only the $x-$coordinate and a single bit indicating the parity of $y$. It is very simple but the problem is that one cannot reliably obtain strings in this representation
since $x(E(\FF_q))$ covers only about half of the elements of $\FF_q$. Furthermore, this technique is vulnerable to censorship since it generates strings distinguishable from random bits. As a countermeasure, Bernstein et al. \cite{BHKL13} used an injective function $$\psi^{-1} : \FF_q \rightarrow E(\FF_q)$$ that maps a subset $S \subset \FF_q$ of length $\approx q/2$ into $E(\FF_q)$. So, if $P\in \psi^{-1}(S)$ then $P$ can be represented by its unique preimage in $S$. Mainly, they constructed Elligator 1 and Elligator 2 functions such that if $q$ is close to a power of $2$ one can have uniform random bit strings. In this paper, we refer only to Elligator 2 since it is more widely applicable than Elligator 1. The encoding function Elligator 2 has the advantage of being very efficient, but suffers from the following limitations: 
\begin{itemize}
    \item Supported curves must have even cardinalities.
    \item Even if we get the right curve, the probability of a random point having a bit-string preimage is about $1/2$. The rejection sampling process is not expensive but if a protocol requires a particular curve point to be communicated as a $b$-bit string there is no reason to think that this point can be expressed in string form.
\end{itemize}
To address these shortcomings Tibouchi presented Elligator Squared \cite{T14}. The basic idea of his construction is to randomly sample a preimage of the point $P$ to be encoded under an admissible encoding of the form $$(x, y) \mapsto f(x) + f(y),$$ where $f$ is an algebraic encoding. The main advantage is that such function $f$ is known to exist for all elliptic curves and induces a close to uniform distribution on the curve. However, we should note that the output size is about twice as large as Elligator and the representation function is more costly compared to Elligator. In 2022 at Asiacrypt, ElligatorSwift, a faster version of Elligator Squared was presented by  Chavez-Saab et al. \cite{CRT22}.\\
\indent  Each of the enumerated encoding algorithms in this section comes with its own pros and cons. For example, the simple standard technique may be useful if the required protocol does not need uniformity of bit-strings and there are fewer security issues. The Elligator 2 function gives rise to efficient bijection for essentially all curves $E$ for which $\#E(\FF_q)\equiv 0 \mod2$. Elligator Squared and its improvement \cite{TK17, CRT22} are suitable for general-purpose elliptic curves.

\begin{remark}
 In the context of our study, we denote by $\psi$ the encoding function, and in reference to the different algorithms cited above we name the corresponding algorithm as \textit{SESS} (for Standard Elligator Swift Squared) algorithm. For an elliptic curve $E$ over a finite field $\FF_q$ and a point $P\in E,$ \textit{SESS($E, P$)} outputs a chain of bits, the encoding of $P$, using the appropriate and the fastest algorithm between the standard encoding, the Elligator 2 function, and the Elligator Squared and its improvement.
\end{remark}

\section{Thresholdization using Torsion Points and Encoding}\label{sec:TuTPE}

In this section we give our main results, the algorithms of sharing and recovering a secret following Definition \ref{def:SSS}. These algorithms are mainly based on the SI-TP problem and Theorem \ref{theo:IsoEfficiency} that efficiently gives its solution. To gain more flexibility and have fewer parameters (at the expense of some restrictions) we rather use another version of this problem on an ad hoc basis. This version is derived from the following question:
\begin{center}
``Is it possible to solve, in polynomial time, the SI-TP problem with one point?"
\end{center}
that had been of interest within the scientific community. In \cite{FFP24}, Feo et al gave an affirmative answer to this question, which we reformulate in the following.\\

\noindent
{\bf Supersingular Isogeny  with One Torsion Problem (SI-OTP)}
    Let $I: E_0\rightarrow E_1$ be a $d-$isogeny between two supersingular elliptic curves $E_0$ and $E_1$ over $\FF_q$ with $q=p$ or $q=p^2$. Let $P\in E(\FF_q)$ be a point of order $N$ and $P'=I(P)$. Given $E_0, \ P, \ E_1,$ and $P'$ compute $I$.

\begin{theorem}[\cite{FFP24}]\label{theo:SI-OTP}
    Let $q$ be a power of a prime $p$. Let $N$ and $d$ be two integers such that $N$ is coprime to $d$. Let $I: E_0\rightarrow E_1$ be a degree $d$ isogeny between two supersingular elliptic curves $E_0$ and $E_1$ over $\FF_q$. If $N$ contains a large smooth square factor then we can resolve the SI-OTP problem in polynomial time.
\end{theorem}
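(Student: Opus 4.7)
The plan is to reduce SI-OTP to a higher-dimensional isogeny computation in the style of Kani's lemma and Robert's attack, using the large smooth square factor of $N$ to compensate for knowing only a single torsion image. First I would observe that recovering $I$ from $(E_0,E_1,P,P')$ is equivalent to producing any degree-$d$ isogeny $\phi:E_0\to E_1$ with $\phi(P)=P'$: two such isogenies differ by an automorphism of $E_1$, and supersingular curves have only small automorphism groups, so any explicit construction of such a $\phi$ solves the problem. Writing $N=M^2 r$ with $M$ the promised large smooth square part, I would then pass to $Q=[r]P$, a point of order $M^2$ whose image $I(Q)=[r]P'$ is known and which carries all the essential information.

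The next step is to package $I$ into a higher-dimensional isogeny whose degree is $M^2$ and whose kernel is expressible in terms of $Q$ and $I(Q)$. Concretely, I would pick an auxiliary isogeny $\alpha$ (or a block of auxiliary isogenies via Zarhin's trick) whose degrees complete $d$ to a norm-equation solution with value $M^2$, and form the corresponding Kani-type matrix
$$F=\begin{pmatrix} \alpha & -\widehat{I} \\ I & \widehat{\alpha} \end{pmatrix}$$
between the appropriate products of abelian varieties. By Kani's lemma, $F$ is an isogeny of degree $M^2$ containing $I$ as a block, so that once $F$ is computed, $I$ can be read off by projection.

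The substantive step, and the main obstacle, is to show that $\ker F \cap (E_0\times E_1)[M^2]$ is uniquely determined by the single pair $(Q,I(Q))$ rather than by the images of a full basis of $E_0[M^2]$, as the standard two-point SIDH attacks require. Here the square hypothesis is crucial: the two cyclic layers $\langle Q\rangle$ and $\langle [M]Q\rangle$ supply effectively two independent pieces of information about $I$ modulo $M$, and the Weil pairing compatibility $e_{M^2}(I(Q),I(R))=e_{M^2}(Q,R)^d$ for any $R$ completing $Q$ to a basis pins down the transverse direction up to data that is absorbed into the kernel of $F$. Verifying carefully that no nontrivial ambiguity remains, and that the hypothesis ``$N$ contains a large smooth square factor'' is exactly what makes this work, is where the real technical content of [FFP24] lies.

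Once $\ker F$ is pinned down, smoothness of $M$ allows $F$ to be evaluated as a chain of isogenies of small prime degree between abelian varieties, each step taking $\mathrm{polylog}(q)$ arithmetic operations as in Robert's dimension-$8$ analysis summarised in Theorem~\ref{theo:IsoEfficiency}. Projecting the resulting $F$ then yields $I$, concluding the proof. The most delicate part is the kernel-uniqueness argument in the previous paragraph; everything else is either bookkeeping or a direct reuse of the polynomial-time higher-dimensional isogeny machinery that already underlies Robert's attack.
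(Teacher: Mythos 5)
Your architecture diverges from the one the paper sketches, and the divergence exposes a real gap. The paper's proof (following \cite{FFP24}) is a two-step reduction: first transform the SI-OTP instance into a \emph{bona fide} SI-TP instance (images of a full basis of some $M$-torsion) on a pair of auxiliary curves, and only then invoke Robert's higher-dimensional machinery as a black box via Theorem~\ref{theo:IsoEfficiency}. You instead try to write down a Kani matrix $F$ directly on $E_0\times E_1$ and claim its kernel is pinned down by the single pair $(Q, I(Q))$ together with Weil-pairing compatibility. That claim is where the argument breaks. If $R$ completes $Q$ to a basis of $E_0[M^2]$, the relation $e_{M^2}(I(Q),I(R))=e_{M^2}(Q,R)^d$ determines $I(R)$ only modulo the cyclic group $\langle I(Q)\rangle$, i.e.\ up to an unknown $a\in\ZZ/M^2\ZZ$ with $I(R)=S+[a]I(Q)$. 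Different values of $a$ produce genuinely different Kani kernels (they correspond to different isogenies $E_0\to E_1$), so the ambiguity is \emph{not} ``absorbed into $\ker F$''; you would have to guess $a$, losing polynomial time. The smooth-square hypothesis is also doing something other than what you say: knowing $\langle Q\rangle$ and $\langle [M]Q\rangle$ is not two independent constraints on $E_0[M]$, since the latter is contained in the former.

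The missing idea is the quotient trick that turns one known point into a known basis. Writing $N=M^2 r$ with $M$ smooth and setting $Q=[r]P$, pass to $\psi_0\colon E_0\to E_0'=E_0/\langle[M]Q\rangle$ and $\psi_1\colon E_1\to E_1'=E_1/\langle[M]Q'\rangle$ (degree $M$ each). The isogeny $I$ descends to $I_*\colon E_0'\to E_1'$ of degree $d$. On $E_0'[M]$ one now has \emph{two} canonical cyclic subgroups: $\langle\psi_0(Q)\rangle$ and $\ker\hat\psi_0$, which together form a basis, and likewise on $E_1'$. One checks $I_*(\psi_0(Q))=\psi_1(Q')$ and $I_*(\ker\hat\psi_0)=\ker\hat\psi_1$, and the unknown scalar relating chosen generators of $\ker\hat\psi_0$ and $\ker\hat\psi_1$ is recovered from a discrete logarithm in $\mu_M\subset\overline{\FF}_q^{\,*}$ via the Weil pairing; smoothness of $M$ is used precisely here, to make that DLP (Pohlig--Hellman) and the construction of $\psi_0,\psi_1$ efficient. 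This yields a full SI-TP instance for $I_*$, Robert's Theorem~\ref{theo:IsoEfficiency} recovers $I_*$ provided $M^2>d$ (``large'' square factor), and $I$ is recovered from $I_*$ since $\gcd(M,d)=1$. So your plan gets the overall flavour right, but the crucial step is a reduction that \emph{changes the curves}, not a uniqueness argument at the level of the original Kani kernel.
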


The basic idea is to reduce the SI-OTP problem to the well-known SI-TP problem. If 
$N$ contains a large smooth square factor, for example, if $N$ is a power of a small prime $\ell$, then this reduction can be done in  $poly(\ell)$-time. After the reduction, we use Robert’s technique to recover the isogeny.\\
\indent In the following $$\psi: E(\FF_q)\longrightarrow \{0, 1\}^{\gamma n/2}$$ is the encoding function, associated with \textit{SESS}, which maps a point $P$ of $E$ to a fixed string of bits of length $\gamma n/2$. $\cP$ is the set of participants and  
$$f:\{0, 1\}^{\gamma n}\longrightarrow \cP$$ 
a sharing function that distributes $ \gamma n$ bits to $n$ participants. Each participants $\cP_i$ gets $\{i, e_i\}$
as a secret share where $i$ is an index and $e_i$ a chain of bits of length $\gamma$. In other words, for a sequence of bits $$(b_0, \ b_1, \ b_2, \ b_3, \ b_4, \ b_5, \ \cdots, \ b_{\gamma n-1}),$$ the function $f$ distributes the bits among the $n$ participants according to the partition specified in Table \ref{tab:sharing_bits}.

\begin{table}
    \centering
    \begin{tabular}{|c|c|}
    \hline
        Block of bits of length $\gamma$ & Participants \\
    \hline    
    $b_0\cdots b_{\gamma-1}$     & $\cP_0$ \\
    \hline
    $b_{\gamma}\cdots b_{2\gamma-1}$   &  $\cP_1$ \\
    \hline
        $\hdots$ & $\hdots$ \\
    \hline
    $b_{\gamma(n-1)}\cdots b_{\gamma n-1}$   &  $\cP_{n-1}$ \\
    \hline
    \end{tabular}
    \caption{Sharing bits to the $n$ participants.}
    \label{tab:sharing_bits}
\end{table}

\begin{theorem}\label{theo:main}
    Let $I: E_0\longrightarrow E_1$ be a secret $d-$isogeny between two supersingular elliptic curves $E_0$ and $E_1$ over $\FF_q$. Let $N$ be an integer containing a large smooth square. Let $P\in E(\FF_q)$ be a point of order $N$ and $P'=I(P)$. Let $\cC$ be a code over $\FF_2$ that can correct up to $\gamma(n-t)$ erasures.  Then one can share the path $I$ between $n$ participants with a threshold $t$.
\end{theorem}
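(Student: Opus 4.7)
The plan is to compose three layers: the bit-level encoding $\psi$ of elliptic-curve points, the binary erasure-correcting code $\cC$, and the SI-OTP solver guaranteed by Theorem \ref{theo:SI-OTP}. I would first describe an explicit sharing procedure: the dealer computes $P' = I(P)$, produces a bitstring $e = \psi(P') \in \{0,1\}^{\gamma n/2}$ via \textit{SESS}, and then encodes $e$ into a codeword $c \in \{0,1\}^{\gamma n}$ of $\cC$. The sharing function $f$ partitions $c$ into $n$ contiguous blocks of length $\gamma$ as in Table \ref{tab:sharing_bits}, and the $i$-th participant receives the share $(i, e_i)$ where $e_i$ is the $i$-th block. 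The triple $(E_0, E_1, P)$ is published as common data.

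Next I would describe recovery by any coalition $T \subseteq \cP$ of size $|T| = t$. The $t$ participants pool their shares, reconstructing $t$ of the $n$ blocks of $c$ and leaving the remaining $\gamma(n-t)$ positions as erasures. Since $\cC$ corrects up to $\gamma(n-t)$ erasures by hypothesis, an erasure decoder of $\cC$ returns $c$ exactly; applying the inverse encoder yields $e$, and then $\psi^{-1}(e) = P'$. Together with the public data $(E_0, P, E_1)$ this constitutes an SI-OTP instance in which $N$ contains a large smooth square and $\gcd(N,d)=1$, so Theorem \ref{theo:SI-OTP} recovers $I$ in polynomial time. Checking that all dimensions line up (in particular, that a binary code with the required length, rate, and erasure-correction radius exists for a feasible choice of $n$, $t$, $\gamma$) is routine and will be revisited in Section \ref{sec:SCS} via Reed--Solomon and burst-erasure instantiations.

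The step I expect to be the main obstacle is the privacy half of the threshold property: any coalition of $t-1$ participants must learn nothing useful about $I$. Such a coalition observes only $(t-1)\gamma$ coordinates of $c$, leaving $(n-t+1)\gamma$ erasures, which strictly exceeds the erasure-correcting radius of $\cC$; hence $c$ is not uniquely determined, and multiple candidate codewords remain consistent with the shares. The subtlety is that mere failure of erasure decoding does not by itself imply information-theoretic secrecy of $P'$. To close this gap I plan to argue in the style of Massey and of Bertilsson--Ingemarsson: the coordinates visible to the adversary correspond to a puncturing of $\cC$ whose compatible fibers are controlled by the weight structure of $\cC^\perp$ on the unknown positions, so by choosing $\cC$ so that these unknown blocks support sufficient dual weight, the message $e$ remains uniformly distributed from the adversary's viewpoint. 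From there, without access to $P'$ the adversary is reduced to the general IPP problem between $E_0$ and $E_1$, which is believed hard for supersingular curves. Assembling the combinatorial indistinguishability of $e$ with this computational reduction completes the $(n,t)$-threshold guarantee.
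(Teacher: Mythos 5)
Your core construction lines up with the paper's: \textit{SESS}-encode torsion data, apply a binary erasure-correcting code, partition the codeword into $n$ blocks of $\gamma$ bits, and invoke the SI-OTP solver (Theorem~\ref{theo:SI-OTP}) at reconstruction. There is, however, one structural difference and one place where you go further than the paper does.

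The structural difference: you publish $(E_0,E_1,P)$ and share only the encoding of $P'=I(P)$, so the message fed into $\cC$ has length $\gamma n/2$. The paper instead \textit{SESS}-encodes both $P$ and $P'$, concatenates $w=S_P\Vert S_{P'}$, and encodes the whole concatenation; in Procedure~\ref{proc:RIP} both points are recovered from the shares and neither is public. Either variant is consistent with the SI-OTP interface, but they are not the same scheme, and the difference matters for bookkeeping: in your version the code has rate $1/2$, whereas in the paper's version the message and codeword both have length $\gamma n$, which is only sensible if the $\gamma n/2$ in the definition of $\psi$ is really the semi-dimension $k/2$ rather than $\gamma n/2$. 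You should also state explicitly the hypothesis $\gcd(N,d)=1$: it is required by Theorem~\ref{theo:SI-OTP} but is not in the statement of Theorem~\ref{theo:main}, and you quietly smuggle it in during the reconstruction step.

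Where you go further: the paper's proof of Theorem~\ref{theo:main} establishes only the sharing and recovery procedures; the threshold property itself is deferred to Theorem~\ref{theo:secscheme}, and even there the argument is a brute-force complexity bound ($2^{\gamma(n-t+1)}$ guesses), not an information-theoretic secrecy argument. You correctly identify that failure of erasure decoding does not by itself imply that $P'$ is hidden, and you propose a Massey/Bertilsson--Ingemarsson-style argument via the weight structure of $\cC^{\perp}$ on the missing coordinates. That is the right instinct, and it is more rigorous than what the paper proves; but it also requires a hypothesis on $\cC$ beyond ``corrects $\gamma(n-t)$ erasures,'' namely a condition on $d^{\perp}$ (or on the supports of low-weight dual codewords) guaranteeing that any $(t-1)\gamma$ coordinates leave the message undetermined and uniformly distributed. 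As stated, Theorem~\ref{theo:main} does not impose any such condition, so you should either add that hypothesis explicitly or scale your claim back to the computational guarantee the paper actually proves in Theorem~\ref{theo:secscheme}. With that caveat, your recovery half is complete and correct, and your privacy half is a sound plan that needs the extra code-theoretic hypothesis to close.
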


\begin{proof}
    We give the secret isogeny $I$ and the extra pieces of information $P\in E_0$ and $P'\in E_1$ to the dealer. Let $$S_P = SESS(E_0, P) \text{ and }
    S_{P'} = SESS(E_1, P').$$ Then $w=S_P||S_{P'}$, the concatenation of $S_P$ and $S_{P'}$, is a word of length $\gamma n$. Let $c=(c_1, \ c_2, \ \cdots, \ c_n)$ be the code-word in $\cC$ given by the encoding of $w$. The evaluation of the distribution function $f$ at $c$ $$f(c) = (\{y_i, i \})_{1\leq i\leq n}, \text{ where } y_i = (c_i, c_{i+1}, \cdots, c_{i+\gamma})$$ gives to each participant $\cP_i$ a share $\{y_i, i \}$ of bits $y_i$ of length $\gamma$ and an index $i$. In the reconstruction phase, we define the following as an initial state:   
    $$ *** \cdots *** \cdots \underbrace{*** \cdots ***}_{ \gamma \ symbols}  \cdots *** \cdots **.$$
Above the brackets we have $\gamma$ symbols $*\in \{0, 1\}$. This part of the proof is handled by Procedure \ref{proc:SIP}.
    
\begin{algorithm}
\caption{SharingIsogenyPath}\label{proc:SIP}
\begin{algorithmic}[1]
\Require $I:E_0\longrightarrow E_1, \ P$, a code $ \cC=[\ell, \ k, d], n \geq 2$.
\Ensure $(\{\cP_i, \ i\})_{0\leq i\leq n-1}$  \Comment{$\cP_i$ is a chain of symbols from $\cC$}
\State $P' \gets I(P)$
\State $S_P \gets SESS(E_0, P)$
\State $S_{P'} \gets SESS(E_1, P')$
\State $w \gets S_P||S_{P'}$ 
\State $c \gets \cE(w)$  \Comment{$c = (c_0, \ c_1, \ \cdots, \ c_{\gamma n-1})$}
\State $\cP \gets ( \ )$ \Comment{A table containing the shares}
\State $i\gets 0$
\While{$i < n$}
    \State $\cP_i \gets ( \ )$
    \State $j \gets 0$
    \While{$j < \gamma$}
        \State $\cP_i+ \gets c_{i\gamma + j}  $            
    \EndWhile
    \State $\cP+ \gets \{i, \cP_i\}$ \Comment{Adding the secret share of participant $\cP_i$}
    \State $i\gets i+1$ 
\EndWhile
\State \textbf{return} $\cP$
\end{algorithmic}
\end{algorithm}

    On the other hand, when $t$ participants $\cP_i$ bring their shares $\{y_i, i \}_{1\leq i\leq t}$ we put each $y_i$ at the rith place using the index $i$. So it is important to record the index $i$ in order to have fewer possible errors. In this case, we can consider the errors as erasures because we know their exact positions. After the $t$ participants fill their shares in the initial state it remains $(n-t)$ block of unknown bits that can be viewed as erasures. So, in this case, we can use the algorithms of decoding of the code $\cC$ to correct the $\gamma(n-t)$ possible erasures. For the next step, we use the inverse of the algorithm \textit{SESS} to get $P$ and $P'$. Finally using this extra information, the algorithms of reduction of Feo et al. \cite[Section 5.4]{FFP24} and Theorem \ref{theo:IsoEfficiency}, the $t$ participants can efficiently extract the secret path $I$ as described in Procedure \ref{proc:RIP}.
\end{proof}

\begin{algorithm}
\caption{RecoverIsogenyPath}\label{proc:RIP}
\begin{algorithmic}[1]
\Require  $E_0, \ E_1, \ \{i, \ \cP_i\}_{i \in [0 \ \cdots \ n-1], \ \#i=t }, \ \cC, \ \cE^{-1}, \ \cD,  \ \gamma, \ \cA $
\Ensure $I:E_0 \rightarrow E_1$
\State $\mathfrak{e} \gets (e, e, \cdots, e)$ \Comment{$\mathfrak{e}$ contains 
$\gamma$ erasures $e$}
\State $y \gets ( \ ) $ 
\For{$0\leq j \leq n-1$}
   \If{$j\in \{i\}_{\#i = t}$} \Comment{Verify if $j$ is in the collection of $i$}
      \State $y+ \gets \cP_j$
   \Else 
       \State $y+ \gets \mathfrak e$ \Comment{The missing symbols are considered erasures}
   \EndIf
\EndFor
\State $c \gets \cE^{-1}$(y) \Comment{Correcting the erasures in $y$}
\State $w \gets \cD(c)$ \Comment{Decoding the code-word $c$}
\State $S_P\gets ( \ )$
\State $S_Q\gets ( \ )$
\For{$0\leq j\leq \gamma n$}
   \If{$j< \gamma n/2$}
      \State $S_P+\gets w_j$
    \Else
       \State $S_Q+\gets w_j$
   \EndIf
\EndFor
\State $P'\gets \psi^{-1}(S_P)$
\State $Q'\gets \psi^{-1}(S_Q)$
\State $I\gets \cA(E_0, \ E_1, \ P', \ Q')$ \Comment{Uses algorithms from Theorem \ref{theo:SI-OTP}} as a black box.
\State \textbf{return} $I$
\end{algorithmic}
\end{algorithm}

\section{Security of the Scheme}\label{sec:SS}
We recall that an $(n, t)$ threshold scheme is perfect if any subset of participants 
$(\cP_i)_{1\leq i\leq n}$ either gives all information about the shared secret $s$ or does not contain any information about $s$.

\begin{theorem}\label{theo:secscheme}
    Let $\mathcal{T}(n, t)$ be the isogeny-based threshold scheme given by a binary code of length $\ell$ and dimension $k$ where $\ell=\gamma n$. If
    \begin{align}\label{bound:SecScheme}
        \frac{k}{\gamma}\leq t\leq n-\frac{128}{\gamma}+1,
    \end{align}
    then $\cT(n, t)$ is a perfect threshold scheme in the context of NIST security level I.
\end{theorem}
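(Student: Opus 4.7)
The plan is to treat the two inequalities on $t$ as coming from two independent requirements: the lower bound $k/\gamma \leq t$ is forced by the reconstruction property (via the Singleton bound), while the upper bound $t \leq n - 128/\gamma + 1$ is forced by the security property (via a counting argument on the adversary's residual uncertainty). So I would structure the proof as two self-contained claims corresponding to conditions (2) and (3) of the definition of an $(n,t)$-threshold scheme, and then check that the bound \eqref{bound:SecScheme} is exactly what makes both go through.

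For the reconstruction direction, I would observe that when $t$ participants pool their shares $\{(y_i, i)\}$, the indices $i$ tell us precisely which $\gamma$-bit blocks of the codeword $c \in \cC$ are present and which are missing. The missing portion has length exactly $\gamma(n-t)$ and its positions are known, so it consists entirely of erasures. By hypothesis $\cC$ corrects $\gamma(n-t)$ erasures, so $c$ is uniquely recovered, and decoding yields $w = S_P \Vert S_{P'}$; applying $\psi^{-1}$ gives $P$ and $P'$, after which Theorem~\ref{theo:SI-OTP} (in the polynomial-time reduction and subsequent invocation of Theorem~\ref{theo:IsoEfficiency}) returns $I$. The feasibility of such a $\cC$ forces $d \geq \gamma(n-t)+1$, and combining this with the Singleton bound $d \leq \gamma n - k + 1$ yields $t\gamma \geq k$, which is precisely the left inequality of \eqref{bound:SecScheme}. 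This confirms the lower bound is a structural constraint rather than an extra hypothesis.

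For the security direction, suppose a coalition of at most $t-1$ participants is given. The coalition knows exactly $(t-1)\gamma$ bits of $c$ and has no information on the remaining $(n-t+1)\gamma$ positions. The upper bound in \eqref{bound:SecScheme} gives $(n-t+1)\gamma \geq 128$, so the coalition faces at least $2^{128}$ candidate completions of $c$, each of which decodes to a potential pair $(\widetilde P, \widetilde{P}')$. Any strategy for distinguishing the true secret $I$ from a completion reduces either to brute-forcing the missing bits, which meets the $128$-bit NIST category I target, or to solving an instance of the underlying SI-OTP problem without the true torsion image, which Theorem~\ref{theo:SI-OTP} tells us is hard absent the missing data. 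Combining the two gives a $128$-bit computational security guarantee for the coalition's view of $I$.

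The main obstacle I expect is the gap between the classical notion of a \emph{perfect} threshold scheme, which demands information-theoretic independence of the secret from any $(t-1)$-coalition's shares, and what \eqref{bound:SecScheme} actually delivers, namely computational indistinguishability at the $128$-bit level. I would handle this by reading ``perfect in the context of NIST security level I'' as computational perfectness, and by making explicit that the argument relies on the hardness of SI-OTP for the residual uncertainty to translate into indistinguishability of $I$; this is the step that must be written most carefully, and also where one should acknowledge the implicit assumption that $\psi$ and the decoder do not leak partial information about $w$ from the known blocks of $c$.
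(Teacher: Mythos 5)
Your proof follows the paper's own argument closely: the lower bound $t \geq k/\gamma$ is exactly the Singleton bound combined with the requirement that $\cC$ correct $\gamma(n-t)$ erasures, and the upper bound $t \leq n - 128/\gamma + 1$ is exactly the requirement that the $\gamma(n-t+1)$ bits hidden from a $(t-1)$-coalition yield a $2^{128}$-sized brute-force space. Your closing paragraph --- reading ``perfect'' as computational, NIST-level-I indistinguishability rather than information-theoretic independence, and flagging the tacit non-leakage assumption on $\psi$ and the decoder --- articulates a real gap that the paper's proof leaves silent; the paper simply counts $2^{\gamma(n-t+1)}$ attempts and moves on without reconciling that with the earlier, strictly information-theoretic, definition of a perfect scheme.
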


\begin{myproof}
    By definition, $\cC$ can correct up to $\gamma(n-t)$ erasure and only $t$ or more participants can make the correction. If $t-1$ or fewer participants want to brute force the process, they have to perform $$2^{\gamma(n-t+1)}$$ attempts since the code can't correct $\gamma(n-t+1)$ erasures. When considering the system is secure for a length of bit equal to 128, as in NIST security level I, we have
    \begin{align*}
        \gamma(n-t+1) \geq 128 & \Rightarrow n-t \geq \frac{128}{\gamma}-1.\\
    \end{align*}
This implies $$t \leq n-\frac{128}{\gamma}+1.$$    
On the other hand, we know that, as a linear code $\cC$ can correct up to $d-1$ erasures with $d$ the minimum distance of $\cC$. According to the singleton bound, $d\leq \ell-k+1$. Since $\ell=\gamma n$, then we have
\begin{align*}
    \gamma(n-t)\leq \gamma n - k \Rightarrow t \geq \frac{k}{\gamma}.
\end{align*}
Finally, we get the bound (\ref{bound:SecScheme}).
\end{myproof}

\noindent From Theorem \ref{theo:secscheme} we can derive corollaries that align with NIST security levels III and V. Without loss of generality, in this paper, we focus on NIST I. \\
\indent Looking at the security bound given in (\ref{bound:SecScheme}) we can see that the number of possible participants $n$ and the threshold $t$ depend mainly on the parameter $\gamma$ in the distribution function $f$ and the dimension $k$ of the code $\cC$. So, for the following we denote by $\cT(n, t, \gamma)$ the threshold scheme in order to highlight the central role of $\gamma$ in the scheme.

\section{On Some Codes for the Scheme $\cT(n, t, \gamma)$}\label{sec:SCS}

To enlarge the bound (\ref{bound:SecScheme})  we should increase the parameter $\gamma$ as much as possible and work with sufficiently long codes with a high erasure correction capability. So, a good implementation of our scheme is to use distance-optimal codes\footnote{An $[\ell, k, d]$ code over $\FF_q$ is distance-optimal if there is no $[n, k, d']$ code over $\FF_q$ with $d' > d$} with small dimensions. Reed-Solomon codes and their subfield codes are good candidates.

\subsection{Binary Reed-Solomon Codes and Isogeny Sharing}\label{subsec:BRSIS}

A binary Reed-Solomon code $RS(2^r, d)$ is a cyclic linear code \cite[Chap. 7, Sec. 8]{MS78} over $\FF_{2^r}$ with minimum distance $d$ and generator $$g(x)=(x+\tau^{m+1})(x+\tau^{m+2})\cdots(x+\tau^{m+d-1})$$
for some integer $m$ and some primitive element $\tau$ of $\FF_{2^r}$. 
If $\cC$ is an $RS(2^r, d)$ code then the length $\ell = 2^r-1$ and the dimension $k=2^r-d$. Reed-Solomon codes are very interesting because they are MDS codes, efficient in error/erasure correction, and natural codes to use if the required length is less than the field size. Furthermore, as we can see in the following theorem they give flexibility on the choice of parameters.

\begin{theorem}[\cite{MS78}]\label{theo:rs_mds} 
Let $\FF_q$ be a finite field of cardinality $q$.
    \begin{enumerate}
        \item For any $k, \ 1 \leq k \leq q + 1$, there exists a $[\ell=q - 1, \ k, \ \ell-k+1]$ Reed-Solomon codes over $\FF_q$.
        \item There exist $[2^r + 2, \ 3, \ 2^r]$ and $[2^r+2, \ 2^r-1, \ 4]$ triply-extended RS that are MDS codes.
    \end{enumerate}
\end{theorem}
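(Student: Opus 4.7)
The plan is to prove the two items separately by giving explicit constructions and then verifying that the parameters saturate the Singleton bound, so that the MDS property follows for free. Throughout, I would use the evaluation interpretation of Reed--Solomon codes rather than the cyclic/generator-polynomial description, since the evaluation picture makes the distance estimate essentially immediate.

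For part (1), I would fix a primitive element $\tau$ of $\FF_q^\times$ and consider the evaluation map $\mathrm{ev}\colon V_k \to \FF_q^{q-1}$ defined by $f \mapsto (f(1), f(\tau), f(\tau^2), \ldots, f(\tau^{q-2}))$, where $V_k = \{f \in \FF_q[x] : \deg f < k\}$ is the $k$-dimensional space of polynomials of degree less than $k$. For $k \leq q-1$ the map is injective, because a nonzero polynomial of degree $<k$ has fewer than $q-1$ roots, so it cannot vanish on all of $\FF_q^\times$. Hence the image $\cC$ has length $\ell = q-1$ and dimension $k$. The same root-counting bound says a nonzero $f \in V_k$ has at most $k-1$ zeros in $\FF_q^\times$, so $\mathrm{ev}(f)$ has at least $(q-1) - (k-1) = \ell - k + 1$ nonzero entries, giving $d \geq \ell - k + 1$. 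Combined with the Singleton bound this forces $d = \ell - k + 1$, so $\cC$ is MDS.

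For part (2), I would use the triply-extended Reed--Solomon construction, which is available precisely in characteristic $2$. Start from the punctured-at-zero Reed--Solomon code of length $2^r - 1$, append evaluations at $0$ and at the two ``points at infinity'' corresponding to the leading coefficient(s), to obtain a code of length $2^r + 2$. For the dimension $k = 3$ choice, I would take the code generated by the rows $(1,1,\ldots,1,0,0)$, $(\alpha_0, \alpha_1, \ldots, \alpha_{2^r-1}, 1, 0)$, $(\alpha_0^2, \alpha_1^2, \ldots, \alpha_{2^r-1}^2, 0, 1)$, where $\alpha_0, \ldots, \alpha_{2^r-1}$ is an enumeration of $\FF_{2^r}$. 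A Vandermonde-type argument (using that in characteristic $2$ squaring is a linear bijection, so $3$ columns of this matrix are linearly independent) shows every $3 \times 3$ minor is nonzero, giving minimum distance $2^r$, which matches the Singleton bound. For the $[2^r+2, 2^r-1, 4]$ code, I would then invoke the fact that the dual of an MDS code is MDS and observe that the orthogonal complement of the previous code has the required parameters, with minimum distance $(2^r + 2) - (2^r - 1) + 1 = 4$.

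The main obstacle will be part (2): verifying that in characteristic $2$ one really does get a length-$(2^r+2)$ MDS code, since over odd characteristic the maximum MDS length is conjecturally $q+1$. Handling the two ``infinity'' positions so that all $3 \times 3$ (respectively all $(2^r-1) \times (2^r-1)$) minors remain nonsingular relies on the Frobenius $x \mapsto x^2$ being an $\FF_2$-linear automorphism of $\FF_{2^r}$, a special feature of characteristic $2$. Once that MDS property is in place, the duality step and the Singleton computation are routine.
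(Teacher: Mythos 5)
The paper does not prove this theorem; it cites MacWilliams and Sloane \cite{MS78} and takes the result as known, so there is no in-paper proof to compare against. Your reconstruction is essentially correct and follows the standard textbook route, with one presentational difference and one caveat worth flagging. On presentation: the paper's surrounding text (Subsection \ref{subsec:BRSIS}) uses the cyclic/generator-polynomial description of Reed--Solomon codes, whereas you use the evaluation description; these are equivalent, and yours makes the distance bound for part (1) immediate from root counting, which is the cleaner argument. On the caveat: as stated, part (1) of the theorem is internally inconsistent --- with $\ell = q-1$ fixed, the range $1 \leq k \leq q+1$ cannot hold since dimension cannot exceed length; the intended range is presumably $1 \leq k \leq q-1$ (the range $k \leq q+1$ would be correct for doubly-extended RS codes of length $q+1$). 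You quietly and correctly restrict to $k \leq q-1$ in your injectivity and distance arguments, but you should call out this discrepancy rather than silently fix it. For part (2), your identification of the role of characteristic $2$ is exactly right: the one potentially degenerate $3\times 3$ minor is the one with two evaluation columns $(1,\alpha_i,\alpha_i^2)^T$, $(1,\alpha_j,\alpha_j^2)^T$ together with $(0,1,0)^T$, whose determinant is $\alpha_j^2 - \alpha_i^2 = (\alpha_j-\alpha_i)^2$, nonzero for $\alpha_i \neq \alpha_j$ precisely because squaring is injective in characteristic $2$ (in odd characteristic, $\alpha_j = -\alpha_i$ kills this minor, and the triply-extended code fails to be MDS). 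The duality step giving the $[2^r+2,\,2^r-1,\,4]$ code is standard and correct.
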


For RS code over $\FF_{2^r}$ usually the degree of extension $r>1$. Indeed the only binary MDS codes of length $\ell$ are the trivial ones, i.e., the codes with parameters $[\ell, 1, \ell], \ [\ell, \ell - 1, 2]$ and $[\ell , \ell, 1 ]$ (see \cite[Chapter 11]{MS78}).
Despite the flexibility given by RS codes, we can't use them directly in our scheme. That is because the output of the encoding of the secret torsion points is a chain of bits. To take advantage of RS codes in our scheme, we use subfield codes from RS codes.
\begin{lemma}
    Let  $[\ell,k,d]$ be a code $\cC$ over $\FF_{q^r}$, its subfield code $\cC^* = \cC|_{\FF_q}$ is defined as follows: $\cC^* = \cC\cap \FF_q^n.$ 
    The code $\cC^*=[\ell, k^*, d^*]$ is a linear code with $d\leq d^* \leq \ell$ and $\ell-k\leq \ell-k^*\leq r(\ell-k)$.
\end{lemma}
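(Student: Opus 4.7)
The plan is to verify the three assertions (linearity, the distance bounds, and the dimension bounds) in turn, with the two non-trivial pieces handled by dualizing between a generator-style and a parity-check-style description of $\cC^*$.

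Linearity comes for free: $\cC^*$ is the intersection of the $\FF_{q^r}$-subspace $\cC \subseteq \FF_{q^r}^\ell$ with the $\FF_q$-subspace $\FF_q^\ell$, both of which are closed under addition and under scalar multiplication by elements of $\FF_q$. For the distance bounds, any nonzero $c \in \cC^*$ is in particular a nonzero codeword of $\cC$, so $\mathrm{wt}(c) \geq d$, which gives $d^* \geq d$; the inequality $d^* \leq \ell$ is automatic from the ambient length, and in fact is only attained by trivial $\cC^*$, but we do not need to sharpen it.

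For the upper bound $k^* \leq k$, I would take an $\FF_q$-basis $c_1,\dots,c_{k^*}$ of $\cC^*$ (which lies in $\FF_q^\ell$ by definition) and argue that these vectors remain $\FF_{q^r}$-linearly independent inside $\cC$. Fixing an $\FF_q$-basis $1,\alpha,\dots,\alpha^{r-1}$ of $\FF_{q^r}$, any relation $\sum_i a_i c_i = 0$ with $a_i = \sum_j a_{ij}\alpha^j \in \FF_{q^r}$ rewrites as $\sum_j \alpha^j \bigl(\sum_i a_{ij} c_i\bigr) = 0$, where each inner sum lies in $\FF_q^\ell$; independence of the $\alpha^j$ over $\FF_q$ forces $\sum_i a_{ij} c_i = 0$ for each $j$, and then $\FF_q$-independence of the $c_i$ forces all $a_{ij} = 0$. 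Hence $k^* \leq \dim_{\FF_{q^r}} \cC = k$, i.e., $\ell - k \leq \ell - k^*$.

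For the lower bound $\ell - k^* \leq r(\ell - k)$, the cleanest route is parity-check matrices. Let $H \in \FF_{q^r}^{(\ell-k) \times \ell}$ be a parity-check matrix for $\cC$. Expand $H = \sum_{j=0}^{r-1} \alpha^j H_j$ with $H_j \in \FF_q^{(\ell-k)\times \ell}$. For $c \in \FF_q^\ell$ we have $Hc^\top = \sum_j \alpha^j (H_j c^\top)$, and since the $\alpha^j$ are $\FF_q$-independent, $Hc^\top = 0$ if and only if $H_j c^\top = 0$ for every $j$. Thus $\cC^*$ is the kernel, inside $\FF_q^\ell$, of the stacked matrix $\widetilde{H} = [H_0^\top \mid \cdots \mid H_{r-1}^\top]^\top \in \FF_q^{r(\ell-k)\times \ell}$, whose kernel has $\FF_q$-dimension at least $\ell - r(\ell-k)$. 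Therefore $k^* \geq \ell - r(\ell-k)$, which is the desired inequality.

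The only slightly delicate step is the $k^* \leq k$ direction, since the naive inclusion $\cC^* \subseteq \cC$ only gives $k^* \leq rk$ when dimensions are counted over $\FF_q$; the point is that the $c_i$ can be chosen in $\FF_q^\ell$, which promotes $\FF_q$-independence to $\FF_{q^r}$-independence. Once that is in place, the parity-check argument delivers the opposite bound automatically, and the linearity and distance claims are immediate.
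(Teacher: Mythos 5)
The paper states this lemma without giving any proof (it is treated as a standard fact about subfield subcodes, implicitly deferring to references such as \cite{MS78} and \cite{HD19}), so there is no paper argument to compare against. Your proof is correct and self-contained, and it is essentially the textbook argument.

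Two remarks on your write-up. First, the step you rightly flag as the delicate one — promoting $\FF_q$-independence to $\FF_{q^r}$-independence — works because the basis vectors $c_1,\dots,c_{k^*}$ lie in $\FF_q^\ell$: when you split $\sum_j \alpha^j\bigl(\sum_i a_{ij} c_i\bigr)=0$ into coordinates, each coordinate is an $\FF_q$-linear combination of the $\alpha^j$, so $\FF_q$-independence of $\{\alpha^j\}$ applies coordinatewise and kills each inner sum; you should make explicit that this coordinatewise reduction is what justifies the step, since the raw statement ``independence of the $\alpha^j$'' is about scalars, not vectors. The same coordinatewise reasoning underlies the equivalence $Hc^\top=0 \iff H_j c^\top=0$ for all $j$ in your parity-check argument. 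Second, the bound $\ell-k^* \le r(\ell-k)$ follows cleanly from your stacked matrix $\widetilde H$ via rank–nullity, $\dim\ker\widetilde H = \ell - \mathrm{rank}(\widetilde H) \ge \ell - r(\ell-k)$, and this also shows the implicit nondegeneracy hypothesis: whenever $\ell - r(\ell-k) > 0$ the subfield code is automatically nonzero, which is the regime where the statement about $d^*$ is meaningful. Your aside that $d^*=\ell$ ``is only attained by trivial $\cC^*$'' is not quite accurate (the binary repetition code attains it with $k^*=1$), but it is a parenthetical remark and does not affect the proof.

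Overall: the paper omits the proof; yours fills the gap correctly, and the parity-check/stacked-matrix route is both the cleanest and the one that generalizes (it is exactly the concrete form of Delsarte's description of subfield subcodes via trace duals).
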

This lemma shows that, when derived from an $RS(2^r, d)$  code, the code $\cC^*$ will be of particular interest in the design of our scheme. Indeed it allows us to work with the same length $\ell$, to have a smaller dimension $k^*\leq k$ and a higher minimum distance $d^*\geq d$.\\
\indent As proposed by Heng and Ding \cite{HD22} to obtain very good subfield codes over small fields, e.g., $\FF_2$, we can choose good codes over an extension field $\FF_{2^r}$ with small dimensions. So, to share the isogeny path as described in Procedure \ref{proc:SIP} we can particularly use the triply-extended RS code given in points $2$ of Theorem \ref{theo:rs_mds}. These linear codes,  with parameters $[2^r + 2, \ 3, \ 2^r]$ over $\FF_{2^r}$ are called hyperoval codes. In \cite{HD19}, it is shown that the subfield codes of some hyperoval codes are distance-optimal, which makes them an appropriate choice in the design of our scheme. We recall that in this scheme the length $\ell$ of the code $\cC$ satisfies
$\ell = \gamma n $ which implies that the number of participants is $n=(2^r+2)/\gamma$ if we choose the code $[2^r + 2, \ 3, \ 2^r]$. We derive the following corollary from Theorem \ref{theo:secscheme}.
\begin{corollary}
    Let $\mathcal{T}(n, t, \gamma)$ be the isogeny-based threshold scheme given by a binary subfield code $[2^r + 2, \ k^*, \ d^*]$ of an hyperoval code $[2^r + 2, \ 3, \ 2^r]$ over $\FF_{2^r}$. If 
    $$\frac{2}{\gamma}\leq t\leq \frac{2^r-126}{\gamma}+1 \ \text{ or } \  \frac{3}{\gamma}\leq t\leq \frac{2^r-126}{\gamma}+1,$$ then we can share a secret isogeny path among $n$ participants with a threshold $t$.
\end{corollary}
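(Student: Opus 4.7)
The plan is to specialize Theorem \ref{theo:secscheme} to the binary subfield code of a hyperoval code. Since the corollary is essentially a computation of the bound (\ref{bound:SecScheme}) under a specific choice of code, the proof reduces to: (i) identifying the length, (ii) identifying the admissible dimensions of the subfield code, and (iii) substituting these into the general bound.

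First I would fix the code: the hyperoval code $[2^r+2,\,3,\,2^r]$ over $\FF_{2^r}$ admits a binary subfield code $\cC^\ast=[2^r+2,\,k^\ast,\,d^\ast]$ over $\FF_2$. By the lemma stated just above the corollary, $k^\ast\le k=3$. The two cases singled out in the corollary correspond to $k^\ast=2$ and $k^\ast=3$, which are the dimensions arising in the distance‑optimal subfield constructions of \cite{HD19}. In each case $\cC^\ast$ is a legitimate binary code of length $\ell=2^r+2$, so the hypotheses of Theorem \ref{theo:secscheme} are met.

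Next I would pass to the scheme parameters. Since the sharing function $f$ distributes blocks of $\gamma$ bits, the length must satisfy $\ell=\gamma n$, hence
$$n=\frac{2^r+2}{\gamma}.$$
Plugging this into the upper bound of (\ref{bound:SecScheme}) yields
$$t\le n-\frac{128}{\gamma}+1=\frac{2^r+2}{\gamma}-\frac{128}{\gamma}+1=\frac{2^r-126}{\gamma}+1,$$
which matches the common right‑hand side in both displayed inequalities of the corollary.

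Finally I would handle the lower bound. The left‑hand side of (\ref{bound:SecScheme}) is $k^\ast/\gamma$, so the two admissible dimensions $k^\ast\in\{2,3\}$ give precisely the two lower bounds $2/\gamma$ and $3/\gamma$ stated in the corollary. Combining the lower and upper bounds for each admissible $k^\ast$ yields the perfectness range, and Theorem \ref{theo:secscheme} then guarantees a valid $(n,t)$‑threshold sharing of the secret isogeny via Procedures \ref{proc:SIP}--\ref{proc:RIP}. The only nontrivial point, and the step I would be most careful about, is justifying that $k^\ast$ really does take the values $2$ or $3$ (and not a smaller value coming from the $k-r(\ell-k)$ side of the subfield‑code inequality); this is where I would invoke the explicit hyperoval subfield constructions of \cite{HD19} rather than the generic bound.
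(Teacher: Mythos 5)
Your proposal is correct and mirrors the paper's own (very terse) justification: substitute $\ell=2^r+2=\gamma n$ and $k=k^\ast$ into the bound of Theorem \ref{theo:secscheme}, noting that the subfield code has dimension $k^\ast\le 3$. You are also right to flag that $k^\ast\in\{2,3\}$ is the only nontrivial step; the paper merely remarks that $k^\ast\le k=3$ ``so $k^\ast$ is $2$ or $3$'' without excluding $k^\ast\le 1$, so your instinct to ground this in the explicit hyperoval subfield constructions of \cite{HD19} is, if anything, more careful than what the paper does.
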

Note that since the dimension $k^*$ of the subfield codes is less or equal to the dimension of the initial code $\cC$ then in this case $k^*$ is equal to $2$ or $3$.\\


\subsection{Burst Erasures and Secret Isogeny Recovery}\label{subsec:BESIR}

On many communication channels or storage systems, the errors/erasures are not random but tend to occur in clusters or bursts. A binary burst erasure refers to a type of erasure that occurs when multiple bits are affected simultaneously, rather than isolated single-bit erasures. A burst of length $\gamma$ is characterized by the fact that $\gamma$ consecutive bits are affected and should be corrected for further communication or processing. The threshold scheme $\cT$ designed in this paper is typically using burst erasures. So in this section, we leverage their properties and the binary representation of a $RS(2^r, d)$ code to give a corollary to the main theorems. \\
\indent Another way of using RS codes in our scheme is to consider their binary representation. For an $RS(2^r, d)$ code $\cC$ over $\FF_{2^r}$, we can map each codeword to its binary representation using the one-to-correspondence 
\[ \begin{array}{llll}
 \Phi: & \FF_{2^r} & \longrightarrow & \FF_2^r\\
 & c_i & \longmapsto & (c_{i1}, \ c_{i2}, \ \cdots, \ c_{ir})
\end{array} \]
between elements of $\FF_{2^r}$ and vectors given by scalars in $\FF_2^r$. For a codeword $c=(c_0, c_1, \cdots, c_{\ell-1})$ in $RS(2^r, d)$ we replace each $c_i$ by the corresponding vector. Adding an overall parity check on each $r-$tuple gives a code $\cC^+$ with parameters 
$$(r+1)(2^r-1), \ rk, \text{ and } D\geq 2d.$$
In addition to the merits enumerated in the previous section, Reed-Solomon codes have also two other advantages. Their binary mappings, $\cC^+$, can have a high minimum distance $D$ \cite[Chap. 10, Sec. 5]{MS78} and they are useful for correcting several bursts.

\begin{theorem}\cite[Chap. 10, Sec. 6]{MS78}\label{theo:BurstBound}
    Let $\cC^+$ be a binary code from an $RS(2^r, d)$ code. Then a binary burst of length $\gamma$ can affect at most $\epsilon$ adjacent symbols from $\FF_{2^r}$, where $\epsilon$ is given by
    $$(\epsilon-2)r + 2 \leq \gamma \leq (\epsilon - 1)r + 1.$$
    So if the minimum distance $D$ is much greater than $\epsilon$, many bursts can be corrected.
\end{theorem}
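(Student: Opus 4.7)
The plan is to reduce the theorem to a purely combinatorial counting problem. Under the bit-expansion map $\Phi$, each symbol of $\FF_{2^r}$ corresponds to a block of $r$ consecutive bits in the binary codeword, and a binary burst of length $\gamma$ is just a window of $\gamma$ consecutive bit positions. The symbols the burst ``affects'' are exactly the $r$-bit blocks that overlap this window; these blocks are automatically adjacent in the codeword, so it suffices to compute the maximum number of blocks that a single window of length $\gamma$ can straddle.

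First I would parameterise the burst by its leftmost bit position $a = jr + s$ with $0 \leq s \leq r-1$. The burst occupies $[jr+s,\ jr+s+\gamma-1]$ and therefore touches the symbols indexed $j,\ j+1,\ \ldots,\ j + \lfloor (s + \gamma - 1)/r \rfloor$, for a total of $1 + \lfloor (s+\gamma-1)/r \rfloor$ symbols. This count is maximised at the worst-case alignment $s = r - 1$ (the burst begins at the very last bit of a symbol), yielding
\[
\epsilon \;=\; 1 + \left\lfloor \tfrac{\gamma + r - 2}{r}\right\rfloor \;=\; 1 + \left\lceil \tfrac{\gamma - 1}{r}\right\rceil.
\]
Inverting the ceiling identity $\lceil (\gamma - 1)/r \rceil = \epsilon - 1$ gives $(\epsilon - 2)r < \gamma - 1 \leq (\epsilon - 1)r$, which rearranges to the stated double inequality $(\epsilon - 2)r + 2 \leq \gamma \leq (\epsilon - 1)r + 1$. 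Tightness is immediate: for any $\gamma$ in that range, the offset $s = r - 1$ realises $\epsilon$ affected symbols, while no other alignment touches more.

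For the concluding clause about correcting many bursts, I would feed this bound into the outer Reed--Solomon decoder. Since one burst of $\gamma$ bits corrupts at most $\epsilon$ symbols of the outer $RS(2^r, d)$ codeword, $b$ independent bursts contribute at most $b\epsilon$ symbol errors (or erasures), so the scheme succeeds whenever $b\epsilon \leq \lfloor (d-1)/2 \rfloor$, i.e.\ for up to $\lfloor (d-1)/(2\epsilon) \rfloor$ bursts (and roughly twice that for erasures). Because $D \geq 2d$, the hypothesis ``$D$ is much greater than $\epsilon$'' makes this budget large, which justifies the informal claim that many bursts can be corrected. I do not expect any real obstacle: the whole proof is bookkeeping around floor/ceiling arithmetic, and the only subtlety is checking that the extremal offset $s = r - 1$ is actually realisable inside the codeword, which fails only at the two end-points of the block-string and never increases the bound.
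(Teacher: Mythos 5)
The paper does not prove this statement --- it cites \cite[Chap.~10, Sec.~6]{MS78} and moves on --- so there is no ``paper's own proof'' to compare against, and your reconstruction stands on its own. Your combinatorial argument is correct: parameterising the burst start as $a=jr+s$, counting the overlapped $r$-bit blocks as $1+\lfloor(s+\gamma-1)/r\rfloor$, maximising at $s=r-1$ to get $\epsilon = 2+\lfloor(\gamma-2)/r\rfloor = 1+\lceil(\gamma-1)/r\rceil$, and then inverting the ceiling all check out, and they match the classical MacWilliams--Sloane statement (which is usually phrased as: a burst of length at most $(t-1)m+1$ corrupts at most $t$ symbols). Your remark on boundary offsets not increasing the bound and your heuristic for the ``many bursts'' clause are both sound, and the paper's own corollary that follows uses exactly the rearrangement $\epsilon \le (\gamma-2)/r + 2$, confirming you have the right formula.

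One thing you should have flagged: your derivation silently assumes each $\FF_{2^r}$-symbol occupies exactly $r$ consecutive bits, i.e.\ you work with the image of the map $\Phi$ alone. But the paper's preceding paragraph defines $\cC^+$ by \emph{adding an overall parity check on each $r$-tuple}, so $\cC^+$ actually packs $r+1$ bits per symbol (its length is $(r+1)(2^r-1)$, not $r(2^r-1)$). With that definition, the worst-case overlap count is governed by block length $r+1$, and the inequality should read $(\epsilon-2)(r+1)+2 \le \gamma \le (\epsilon-1)(r+1)+1$, which is strictly stronger (fewer symbols affected per burst). The theorem's formula as printed, and the corollary built on it, are therefore only consistent with the $r$-bit-per-symbol code; applying them verbatim to the parity-checked $\cC^+$ gives a \emph{conservative} (safe but not tight) bound. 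Your proof is the right proof of the formula as written --- but note explicitly that it proves the statement for the un-parity-checked concatenation, and that using it for the paper's $\cC^+$ costs a little slack.
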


To make use of this result in our scheme we perform a back-and-forth between RS codes and their binary representations. 
\begin{enumerate}
    \item We consider the code $\cC$ used in Procedures \ref{proc:SIP} and \ref{proc:RIP} as a code $\cC^+$ from the binary representation of an $RS(2^r, d)$ Reed-Solomon code over $\FF_{2^r}$.
    \item We make a call to Procedure \ref{proc:SIP} to share $\gamma n$ bits among the $n$ participants.
    \item For the recovery of the isogeny
    \begin{itemize}
        \item for each cluster of $\gamma-$bits given by each of the $t$ participants we give the corresponding symbol in $\FF_{2^r}$ using the inverse of the map $\Phi$,
        \item for the missing $(n-t)$ clusters of $\gamma-$bits, we choose random bits in $\{0, 1\}$ then give the corresponding symbol in $\FF_{2^r}$. The obtained symbols are viewed as erasure in the $RS(2^r, d)$ code.
    \end{itemize}
    \item Using Theorem \ref{theo:BurstBound} we measure the number of symbols affected over $\FF_{2^r}$.
    \item We obtain a codeword $c$ in $RS(2^r, d)$ by correcting the erasures.
    \item We give the codeword $c^+$ that corresponds to the binary representation of $c$.
    \item We use the last part of Procedure \ref{proc:RIP} (i.e., decoding and extracting in lines 11 to 23) to recover the isogeny. \\
\end{enumerate} 

The importance of the above algorithm stems from the fact that we can directly work with RS codes over $\FF_{2^r}$ and easily get a binary representation of their symbols using a basis of $\FF_{2^r}$ over $\FF_2$. We summarize the result of this section in the following corollary.

\begin{corollary}
    Let $\cT(n, t, \gamma)$ be an isogeny-based threshold scheme as defined in Theorem \ref{theo:main}. Assume that the code $\cC$ is a binary representation $\cC^+$ of an $RS(2^r, d)$ code. If  $$r>\gamma-2 \text{ and } d\geq 2(n-t)+1$$ then we can share a secret isogeny among $n$ participants with a threshold $t$. 
\end{corollary}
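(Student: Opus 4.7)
The plan is to reduce the corollary to the main theorem by showing that the binary code $\cC^+$ obtained from $RS(2^r, d)$ can correct the $\gamma(n-t)$ erasures that appear in Procedure \ref{proc:RIP}, but analyzed at the level of $\FF_{2^r}$-symbols rather than bits. The key observation is that Procedure \ref{proc:SIP} distributes the shares in consecutive blocks of length $\gamma$, so when $n-t$ participants are absent, the missing bits form exactly $n-t$ bursts of length $\gamma$ within the binary codeword $c^+$.

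First I would invoke Theorem \ref{theo:BurstBound} to translate each binary burst of length $\gamma$ into a bound on the number of adjacent $\FF_{2^r}$-symbols it can corrupt. With the hypothesis $r > \gamma - 2$, i.e. $r \geq \gamma - 1$, the inequality $\gamma \leq (\epsilon-1)r + 1$ is satisfied for $\epsilon = 2$, and (assuming $\gamma \geq 2$, which is the only interesting case) the lower inequality $(\epsilon-2)r+2 \leq \gamma$ also holds. Hence each of the $n-t$ binary bursts of length $\gamma$ affects at most $2$ consecutive symbols of $\FF_{2^r}$ in the underlying $RS(2^r,d)$ codeword $c$. Collecting all bursts, we obtain at most $2(n-t)$ symbol erasures in $c$.

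Next I would use the erasure-correction capability of Reed--Solomon codes: since $RS(2^r,d)$ has minimum distance $d$ and can correct up to $d-1$ erasures, the hypothesis $d \geq 2(n-t)+1$ is precisely what is needed to guarantee that the $2(n-t)$ symbol erasures can be corrected. Once $c$ is recovered, applying $\Phi$ coordinate-wise gives back the binary representation $c^+$, and from there Procedure \ref{proc:RIP} (lines 11--23) proceeds exactly as in the proof of Theorem \ref{theo:main}: the decoder extracts the word $w = S_P \| S_{P'}$, the inverse encoding $\psi^{-1}$ recovers $P$ and $P'=I(P)$, and the SI-OTP solver of Theorem \ref{theo:SI-OTP} (applied as the black box $\cA$) outputs the secret isogeny $I$ in polynomial time. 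Conversely, any coalition of fewer than $t$ participants produces more than $\gamma(n-t)$ unknown bits, which lies beyond the erasure-correcting radius and therefore yields no information on $I$, matching the security guarantee of Theorem \ref{theo:secscheme}.

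The only delicate step is the burst-alignment argument: one must check that the $n-t$ missing $\gamma$-bit blocks really form disjoint bursts in the sense of Theorem \ref{theo:BurstBound}, and that the bound $\epsilon \leq 2$ therefore applies independently to each. This is where the design choice of distributing consecutive $\gamma$-bit windows to participants, as made explicit in Table \ref{tab:sharing_bits} and Procedure \ref{proc:SIP}, is essential; any other distribution pattern would scatter the erased bits and invalidate the burst bound. Once this alignment is noted, the rest of the proof is a straightforward chaining of the main theorem with the RS/burst machinery.
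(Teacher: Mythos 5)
Your proof is correct and follows essentially the same route as the paper: you invoke Theorem~\ref{theo:BurstBound} to deduce that under $r>\gamma-2$ each binary burst of length $\gamma$ corrupts at most $\epsilon=2$ symbols of $\FF_{2^r}$, conclude that at most $2(n-t)$ symbol erasures occur, and then use the MDS property of Reed--Solomon codes (correcting up to $d-1$ erasures) together with $d\geq 2(n-t)+1$ to finish via Procedure~\ref{proc:RIP}. Your additional remarks on burst alignment and the sub-threshold case are sound elaborations but do not change the argument.
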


\begin{myproof}
    From Theorem \ref{theo:BurstBound}, we know that the number of RS affected symbols from each binary burst erasure is $$\epsilon \leq \frac{\gamma-2}{r}+2.$$ So if $r>\gamma-2$ the burst length for the $RS(2^r, d)$ code is less or equal to $2$. The total number of erasures over $\FF_{2^r}$ is $\leq2(n-t)$.  By definition, Reed-Solomon codes are MDS and can correct up to $d-1$ erasures. So, if $$2(n-t)\leq d-1 \Leftrightarrow d\geq 2(n-t)+1,$$
    we can correct the erasures and use the previous algorithm to efficiently recover the isogeny path.
\end{myproof}

\bigskip

For the efficiency of the derived algorithms, we note the following. As a cyclic code, we can use efficient systematic encoders to encode words in the $RS(2^r, d)$ code and as a special case of BCH codes \cite[Chap. 9, Sec. 6]{MS78} we can use decoding algorithms pertaining to these types of codes for the decoding process.


\end{document}